\renewcommand{\qed}{\hspace{\stretch1}$\blacksquare$}
\begin{document}

\mainmatter  

\title{Probabilistic Model Checking of Incomplete Models}

\author{Shiraj Arora \and M. V. Panduranga Rao }

\institute{Indian Institute of Technology Hyderabad\\
India \\
\{cs14resch11010, mvp\}@ iith.ac.in}

\maketitle

\begin{abstract}
It is crucial for accurate model checking that the model be a complete and faithful representation of the system. Unfortunately, this is not always possible, mainly because of two reasons: (i) the model is still under development and (ii) the correctness of implementation of some modules is not established. In such circumstances, is it still possible to get correct answers for some model checking queries?

This paper is a step towards answering this question. We formulate this problem for the Discrete Time Markov Chains (DTMC) modeling formalism and the Probabilistic Computation Tree Logic (PCTL) query language. We then propose a simple solution by modifying DTMC and PCTL to accommodate three valued logic. The technique builds on existing model checking algorithms and tools, obviating the need for new ones to account for three valued logic. 

One of the most useful and popular techniques for modeling complex systems is through discrete event simulation. Discrete event simulators are 
essentially code in some programming language. We show an application of our approach on a piece of code that contains a module of unknown correctness.

\keywords{Probabilistic models, Probabilistic Model checking Three-valued Logic, Discrete Time Markov Chain, Probabilistic
Computation Tree Logic.}

\emph{A preliminary version of this paper appears in the proceedings of Leveraging Applications of Formal Methods, Verification and Validation:
Foundational Techniques (ISoLA 2016), LNCS 9952, Springer.}

\end{abstract}

\section{Introduction}
Probabilistic model checking is an important technique in the analysis of stochastic systems. Given a formal
description of the system in an appropriate modeling formalism and a requirement specification in an
appropriate system of formal logic, the problem is to decide whether the system
satisfies the requirement specification or not. Popular model checking techniques for such systems include numerical 
model checking which is expensive but accurate, and statistical model checking wherein accuracy can be traded off for
speed~\cite{courcoubetis88,RV10a,NumVsStat}.

Modeling formalisms for stochastic systems are usually variants of Markov Chains like Discrete and Continuous Time 
Markov Chains (DTMC and CTMC respectively)~\cite{CTMC}, 
Constrained Markov Chains~\cite{CMC} and Probabilistic 
Automata~\cite{APA}.
Specification requirement queries are typically formulated in logics like 
Probabilistic Computation Tree Logic (PCTL)~\cite{HanssonJ94} and Continuous Stochastic 
Language (CSL)~\cite{ModelCheck}.

However, for more complex systems, it is convenient to use more powerful modeling techniques like
Discrete Event Simulation (DES) and agent based simulation, and statistical model checking for analysis~\cite{Younes}. 
Indeed, statistical model checkers that can be coupled with discrete event simulators have been designed. 
Tools like PLASMA~\cite{PLASMA1,PLASMA2} and
MultiVesta~\cite{MultiVesta}, 
which builds on the statistical model checker Vesta~\cite{Vesta} and its parallel variant PVesta~\cite{PVesta} are recent
popular examples.

While substantial
work has been done in the model checking domain, 
important practical problems can arise due to the quality of the simulation tool itself.
For example, there could be stubs for unwritten modules in the simulation tool, or modules whose correctness is not yet established. It is not
clear how good such a simulator is for the purpose of model checking. Is it, for example, impossible to
verify the satisfaction of a given query on such an implementation? Or is it the case that in spite of lacunas
in the implementation, some model checking queries can still be answered?

In this paper, we demonstrate a simple algorithm towards answering this question. 
The central idea originates from the observation that at an abstract level, the problem boils down
to the inability of assigning truth values to atomic propositions in a state of the model. 
We demonstrate the approach using appropriately modified DTMC and PCTL. The proposed modifications are as
follows: In the state of a DTMC, an atomic proposition can take the value $Unknown$
(abbreviated ``$?$") in addition to the usual $True$ ($T$) or $False$ ($F$). The syntax and semantics of PCTL are modified so 
that a PCTL formula can  also take the value ``$?$".

Intuitively, the question that we ask is: Are there a sufficient number of paths in the DTMC that do not evaluate 
to ``$?$"? If so, does the modified PCTL query evaluate to $True$ or to $False$ on this DTMC?
Our algorithm answers these questions by invoking the model checking tool twice (PRISM~\cite{prism} in our case) as
a subroutine. This is a crucial advantage, as it means that the model checker itself need not be changed to account for three valued logic.

We illustrate applications of the algorithm with examples of varying complexity. In particular, we demonstrate the 
usefulness of our approach with an example program that has a module of unknown correctness.

The paper is arranged as follows. The next section briefly discusses some preliminary notations and definitions, and
relevant previous work done on model checking using three valued logic. Section 3 discusses our modifications 
in the definitions of DTMC and PCTL and the modified model checking algorithm. Section 4 discusses implementation details, the examples,
and the results. Section 5 concludes the paper with a brief discussion on future directions.

\section{Preliminaries and Related Work}
This section briefly discusses some basic definitions and terminology that will be used subsequently in the paper.
For details, see~\cite{ModelCheck}.

\subsection{Discrete Time Markov Chains (DTMC)}

A Discrete Time Markov Chain (DTMC) is one in which transition from one state to another occurs in 
discrete time steps. 

\begin{definition} A DTMC is a tuple $M = (S, \mathbb{P}, s_{init} , AP, L) $ where $S$ is a nonempty 
set of states, $\mathbb{P}$ : $S \times S \rightarrow [0, 1] $ is the transition probability function such that for all states 
$s\ \in\ S$ :\[\sum_{s'\ \in \ S}\ \mathbb{P}(s,s') = 1\] $s_{init} \in S$ is the initial state, $AP$ is a set of atomic
 propositions, and $L : S \rightarrow 2^{AP}$ is a labeling function, which assigns to each state a subset of $AP$ 
 that are true in that state.
\end{definition}

\begin{definition}
A \textbf{path} $\pi$ in a DTMC $M$ is a sequence of states $s_{0} ,s_{1}, s_{2}...$ such that for all $i=0,\ 1,\ 2,...$ , 
$\mathbb{P}(s_{i},s_{i+1})\ >0$.  The $(i+1)^{th}$ state in a path $\pi$ is written as $\pi[i]$. $Path(s)$ denotes the 
set of all infinite paths which start from a state $s$ in the model, $M$. $Paths_{fin}(s)$ is the set of all
 finite paths starting from state $s$.
\end{definition}

\begin{definition}
A \textbf{cylinder set}, $C(\omega)$ is the set of infinite paths that have a common finite prefix $\omega$ of length $n$. Let $\Sigma_{Path(s)}$
be the smallest $\sigma$-algebra generated by $\{C(\omega)\ |\ \omega \in Paths_{fin}(s)\}$. Then, we can define $\mu$ on the 
measurable space $(Path(s)$,$\Sigma_{Path(s)})$ as the unique probability measure such that:
\[\mu(C(\omega)) = \prod_{i=0}^{n-1} \mathbb{P}(s_{i},s_{i+1})\]
\end{definition}

\subsection{Probabilistic Computation Tree Logic (PCTL)}
Probabilistic Computation Tree Logic (PCTL), an extension of Computation Tree Logic (CTL), was introduced by Hansson and 
Johnson \cite{HanssonJ94} for analyzing discrete time probabilistic systems.\\
\\
\textbf{Syntax of PCTL:}
\[ \Phi\ ::=\  T\ |\ a\ |\ \Phi_{1} \land \Phi_{2} \ |\ \lnot\Phi \ |\ \mathbb{P}_{\bowtie \theta} [\psi]\]
\[ \psi\ ::=\ X \Phi \ | \ \Phi_{1} U\ \Phi_{2} \ | \ \Phi_{1} U^{\leq k}\  \Phi_{2}\]
where $\Phi$, $\Phi_{1}$, and $\Phi_{2}$ are state formulas, $\psi$ is a path formula, \textit{a} is an atomic proposition, 
$\theta \in [0,1]$ is the probability constraint, $\bowtie\ \in \{ <, \ >,\ \leq,\ \geq \}$ represents the set of operators, 
and $k\ \in\ \mathbb{N}\ $ is the time bound. The $ X$, $U$, and  $U^{\leq k} $ operators are called \textit{Next}, \textit{Until} 
and \textit{Bounded Until} respectively. \\
\\
\textbf{Semantics of PCTL:}

Let $M : (S,\mathbb{P}, s_{init},AP,L) $ be a Discrete Time Markov Chain. Let 
$s \ \in\ S$, $a\ \in\ AP$, and $\Phi,\ \Phi_{1},\ \Phi_{2}$ be PCTL state formulas, and $\psi$ be a PCTL path formula. 
Then, $\Phi$ is said to be satisfied in state $s$ i.e. $(s,\Phi)=T$ if:
\begin{center}
\begin{tabular}{ccc}
      $ (s,T)\ =\ T $, \\
      $ (s,a)\ =\ T $ & iff & $a\ \in L(s)$ , \\
      $ (s,\lnot\Phi)\ =\ T $ & iff & $(s,\Phi) = F $ , \\
      $ (s,\Phi_{1}\land\Phi_{2})\ =\ T $ & iff & $ (s,\Phi_{1})=T\ \land (s,\Phi_{2})=T $ ,\\
      $ (s,\mathbb{P}_{\bowtie\ \theta} (\psi))\ =\ T $ & iff & $\ \ \mu(\{\pi \in Path(s)\ |\ (\pi,\psi)\ =\ T \})\ \bowtie \theta$ 
\end{tabular}
\end{center}

If $\pi\ = s_{0}\ s_{1}\ s_{2}...\ $ is a path in $Path(s_{0})$ then \(\Pr( (s,\psi) =T)\ =\ \mu\{\ \pi \in Path(s)\ |\ (\pi,\psi)=T\}\) 
i.e. the probability of the set of paths starting from $s$ which satisfy the path formula $\psi$. The last satisfaction relation 
for a state formula thus states that the probability that $\psi$ is true on paths starting at $s$ satisfies
$\bowtie \theta$. A path formula $\psi$ is said to be satisfied for path $\pi$ i.e. $(\pi,\psi)=T$ if:
\begin{center}
\begin{tabular}{ccc}
$ (\pi, X \Phi)=T$ & iff & $(\pi[1],\Phi)=T $ , \\
$ (\pi, (\Phi_{1}\ U\ \Phi_{2}))=T$ & iff & [$\exists i \geq 0 \ |\ (\pi[i],\Phi_{2})=T ]\ \land [\ \forall j < i, (\pi[j],\Phi_{1})=T\ ]$ , \\
$ (\pi,(\Phi_{1}\ U^{\leq k}\ \Phi_{2} ))=T $ & iff & $\ [\exists\  i \leq k \ |\ (\pi[i],\Phi_{2})=T\ ]\ \land [\ \forall j < i, (\pi[j],\Phi_{1})=T\ ]. $
\end{tabular}
\end{center}

\noindent \textbf{Problem Statement for PCTL model checking:} Given a DTMC $M$, decide whether a PCTL formula $\Phi$
evaluates to $T$ or $F$ on $M$.

\subsection{Three-valued logic and model checking}

Multi-valued logics have been comprehensively investigated in the past few decades. In addition to having a rich
theory, they have also found practical applications. Depending on the problem, classical binary language can be extended
to include additional truth values.
For example, an additional truth value can be used to represent inconsistent and incomplete information. 
An application might also demand that we use two different values to denote inconsistent and incomplete information separately. 

In this work, we will use three valued logic. We expand the logic associated with atomic propositions in the state of a DTMC to 
include $Unknown$, denoted by the question mark symbol ``$?$". In what follows, we will use $Unknown$ and $?$ interchangeably. A number 
of different truth tables have been 
designed for three valued logics
 \cite{mal93many,putnam57three,manyval}. The three valued logic used in this work has all the properties of a Quasi-Boolean 
lattice and the 
truth tables for logic operations are described in Tables \ref{and}, \ref{or} and \ref{not}.

\begin{table}[h]
\centering
\begin{minipage}{0.3\textwidth}
\centering

\begin{tabular}{c| c c c}
        $\land$ & T & ? & F \\
        \hline
         T & T & ? & F \\ 
         ? & ? & ? & F \\
         F & F & F & F 
\end{tabular}
\caption{AND operator }
\label{and} 
\end{minipage}%
\hfill
\begin{minipage}{0.3\textwidth}
\centering
    \begin{tabular}{c| c c c}
        $\lor$ & T & ? & F \\
        \hline
         T & T & T & T \\ 
         ? & T & ? & ? \\
         F & T & ? & F 
    \end{tabular}
\caption{OR operator }
\label{or}
\end{minipage}%
\hfill
\begin{minipage}{0.3\textwidth}
\centering
    \begin{tabular}{c| c}
        $\lnot$ &  \\
        \hline
         T & F \\ 
         ? & ? \\
         F & T 
    \end{tabular}
\caption{NOT operator}
\label{not}
\end{minipage}

\end{table}

Indeed, three valued logic has been used in the past for model checking in non-probabilistic settings--for example, 
LTL~\cite{bruns99model,bruns00gen,genLTL} and CTL~\cite{chechik00,chechik01model}. Chechik et al.~\cite{chechik00,chechik01model}
have used three valued 
logic for atomic propositions as well as for the transition functions. In case of transition functions, the \textit{True} 
and \textit{False} values denote the presence or absence of a transition between two states respectively.
The third truth value represents the lack of information about the transition.

Three valued logic has also been used with numerical model checking of probabilistic systems, but with a different
motivation and solution. To overcome 
the problem of state-space explosion in numerical model checking, two or more states of a model are combined, yielding
an \emph{abstract} Markov chain.
However, over-abstraction often leads to a significant loss of information. Three valued logics have been 
  associated with abstract probabilistic systems wherein an $Unknown$ value represents loss of information, indicating
  that the level of abstraction should be decreased. Model checking of an abstract Markov chain 
is often done by reducing 
it to a Markov decision process and then using model checking techniques 
for Markov decision processes. For more details, please see~\cite{dontknow,huththree,klinkthree}. 

\section{Problem Statement and Solution}

As mentioned earlier, the aim of this work is to study the effect of an information being unknown, 
in asserting whether a given property is satisfied in the model 
or not. To perform model checking on such three valued systems, both DTMC and PCTL 
need to be modified. While in case of DTMC the labeling function $L$ is modified,
the semantics are altered for PCTL. 
Intuitively, in our approach, the model checker aims to identify if there are too many paths 
in a model wherein it is not known whether a property will be satisfied or not. Thus, 
the model checker first evaluates whether the property is satisfied in the model and 
if not, it examines the reason behind the lack of satisfaction. 

In the coming subsections, we discuss the modifications in DTMC and PCTL, and the problem statement.

\subsection{DTMC with Question Marks}
A Discrete Time Markov Chain with question marks (qDTMC) is a tuple $M : (S,\mathbb{P},s_{init},AP,L)$ with a finite 
non-empty set of states $S$, a transition probability function $\mathbb{P}:S\times S \rightarrow [0,1]$ such 
that for all states $s\ \in\ S$ : \(\sum_{s'\ \in \ S}\ \mathbb{P}(s,s') = 1\), the initial state $s_{init} \in S$, 
a set of atomic propositions $AP$ and labeling function $L:S\times AP\rightarrow \{T,F,\textrm{{\bf ?}}\}$. 

\subsection{PCTL with Question Marks}

The syntax of PCTL in the context of three valued logic (hereafter referred to as qPCTL for convenience) remains
the same. The operators ($\land$, $\lor$, $\lnot$) and operands ($T,F,?$) however, are as defined in 
Tables 1, 2 and 3 for three valued logic. Therefore, the structure of the queries remains unchanged. 
However, the semantics need to be modified:

\noindent \textbf{Semantics:}

Let $M : (S,\mathbb{P},s_{init},AP,L) $ be a qDTMC model. 
Let $s \ \in\ S$, $a\ \in\ AP$, $\Phi$, $\Phi_{1}$, $\Phi_{2}$ be qPCTL state formulas, and $\psi$ be a qPCTL 
path formula. Then, semantics for  $\Phi$ are as stated below:
\begin{center}
       $(s,T)\ =\ T$ , \\
       $(s,F)\ =\ F$ , \\
       $(s,?)\ =\ ?$ . 
\end{center}
\begin{equation*}
  (s, a) = \left\{
      \begin{array}{ccc}
            T & \text{iff} & L(s,a)=T ,\\
	        F & \text{iff} & L(s,a)=F , \\
		    ? & \text{iff} & L(s,a)=? .
      \end{array} \right.
\end{equation*}

\begin{equation*}
 (s, \lnot\Phi) = \left\{
      \begin{array}{ccc}
            T & \text{iff} & (s,\Phi)=F,\\
	        F & \text{iff} & (s,\Phi)=T,\\
		    ? & \text{iff} & (s,\Phi)=? .
      \end{array} \right.
\end{equation*}
 
\begin{equation*}
  (s, \Phi_{1}\land\Phi_{2}) = \left\{
      \begin{array}{ccc}
            T & \text{iff} &  (s,\Phi_{1})=T\ \land (s,\Phi_{2})=T ,\\
	        F & \text{iff} &  (s,\Phi_{1})=F\ \lor (s,\Phi_{2})= F ,\\
		    ? & & otherwise.
      \end{array} \right.
\end{equation*}

The intuition behind the above definitions follows directly from three valued logic. 
The semantics of the probabilistic state formula are defined as follows:
\begin{equation*}
  (s, Pr_{\geq\theta}(\psi)) = \left\{
      \begin{array}{rl}
            T & \ \ \text{if } \mu\{\pi\in Path(s):(\pi,\psi)=T\} \geq \theta,\\
	        F & \ \ \text{if } \mu\{\pi\in Path(s):(\pi,\psi)=F\} \geq 1-\theta,\\
		    ? & \ \ \text{if } (\mu\{\pi\in Path(s):(\pi,\psi)=T\} < \theta) \\
		      & \ \ \land (\mu\{\pi\in Path(s):(\pi,\psi)=F\} < 1-\theta) .
			    \end{array} \right.
\end{equation*}

We first note that every formula must evaluate to one of $T,F,$ or $?$.
The above definition follows from the intuition that $(s, Pr_{\geq\theta}(\psi))$ evaluates to $T$ if 
at least $\theta$ fraction of the paths evaluate $\psi$ to $T$. Also, if $1-\theta$ (or more) fraction 
of the paths evaluate to $F$, then there are sufficient number of paths to evaluate $(s, Pr_{\geq\theta}(\psi))$ to $F$. 
However, if there does not exist enough paths to decisively tell whether or not the property $\psi$ holds, then 
$(s, Pr_{\geq\theta}(\psi))$ evaluates to $?$. We now turn to the semantics of the path formulas.

\begin{equation*}
  (\pi, X\Phi) = \left\{
      \begin{array}{rl}
            T & \text{if } (\pi[1],\Phi)=T ,\\
             F & \text{if } (\pi[1],\Phi)=F ,\\
		       ? & \text{if } (\pi[1],\Phi)=? .
			    \end{array} \right.
\end{equation*}

\begin{equation*}
  (\pi, \Phi_1 U^{\leq k} \Phi_2) = \left\{
      \begin{array}{rl}
            T & \text{if } \exists i\leq k: (\pi[i],\Phi_2)=T \wedge \forall i'< i: (\pi[i'],\Phi_1)=T\\
	          F & \text{if } (\forall i\leq k:( \pi[i],\Phi_2)=F) \vee (\exists i\leq k : (\pi[i],\Phi_2)\neq F \wedge
		  \\ &\qquad \qquad\qquad\qquad\qquad\qquad\ \ \  \exists i'< i :(\pi[i'],\Phi_1)=F), \\
		        ? & \text{otherwise}.
			    \end{array} \right.
\end{equation*}
			    
\begin{equation*}
  (\pi, \Phi_1 U \Phi_2) = \left\{
      \begin{array}{rl}
            T & \text{if } \exists i: (\pi[i],\Phi_2)=T \wedge \forall i'< i: (\pi[i'],\Phi_1)=T\\
	          F & \text{if } (\forall i:( \pi[i],\Phi_2)=F) \vee (\exists i : (\pi[i],\Phi_2)\neq F \wedge
		  \\ &\qquad \qquad\qquad\qquad\qquad \exists i'< i :(\pi[i'],\Phi_1)=F), \\
		        ? & \text{otherwise}.
			    \end{array} \right.
\end{equation*}

First, we note that $(\pi,X\Phi)$ evaluates to $T,F$ or $?$ depending on whether $\Phi$ is $T,F$ or $?$ in
$\pi[1]$ that is, in the next state. 

The \textit{bounded until} formulas in qPCTL are simple extension of 
the corresponding formulas in the standard PTCL. For example, a \textit{bounded until} formula 
$\Phi_1U^{\leq k}\Phi_2$ evaluates to $?$ if one of the following happens:
\begin{itemize}
\item $\Phi_2$ is $?$ in all the states up to $k$.
\item $\Phi_2$ is $?$ in at least one state and is never $T$ in any of the states along the path up to $k$, 
			but $\Phi_1$ is never $F$. 
\item $\Phi_2$ is $T$ for some $i\leq k$ and $\Phi_1$ is $?$ in at least one state but never $F$ in any of  
			the states upto $i$.
\end{itemize}
\noindent \textit{Unbounded until} is also extended similarly for qPCTL. We are now in a 
position to formally define the problem. \\

\noindent \textbf{Problem Statement:} Given a  qDTMC $M$, and a qPCTL formula $\Phi$, decide whether $(s_{init},\Phi)$
evaluates to $T$, $F$ or $?$.

\subsection{The Algorithm}

As mentioned earlier, our algorithm for model checking qDTMC using qPCTL uses binary model 
checkers as a subroutine. The algorithm involves modifying the input qDTMC suitably
before subjecting it to binary model checking queries. The central idea behind the algorithm is to use
these modifications to filter the three truth values successively, using only binary valued model checkers.
In what follows, for ease of exposition, we denote the outcomes of the binary model checker by $T'$ and 
$F'$. Algorithm \ref{qMC} describes our approach.
\begin{algorithm}
\caption{\textbf{qMC}}
\label{qMC}
\begin{algorithmic}
\vspace{2mm}
\STATE INPUT: A qDTMC $M$ and a qPCTL formula $\Phi$.
\IF[\emph{Conditional 1}]{$\Phi$ contains an AP of form $\lnot a$}
    \STATE Add new AP $a'= \lnot a$ in $M$.
    \STATE Replace all instances of $\lnot a$ with $a'$ in $\Phi$.
\ENDIF 
\STATE Set $?$ to $F$ in qDTMC $M$ to obtain binary DTMC $M^{(1)}$.
\IF[\emph{Conditional 2}]{BINARY\_MC$(M^{(1)}, \Phi) =T'$}
    \RETURN $T$
    \ELSE   
 \STATE  Set $?$ to $T$ in qDTMC $M$ to obtain binary DTMC $M^{(2)}$.
        \IF[\emph{Conditional 3}] {BINARY\_MC$(M^{(2)},\Phi) = F'$}
	       \RETURN $F$ 
	           \ELSE  
		           \RETURN $?$ 
		\ENDIF  
\ENDIF
\end{algorithmic}
\end{algorithm}

Intuitively, the algorithm proceeds as follows.
In each phase of the algorithm, $?$ truth values in the qDTMC $M$ are identified with either $T$ or $F$. 
Let $\Phi$ be a qPCTL formula consisting of an atomic proposition $a$. Then, all instances of the atomic 
proposition $a$ with truth value $?$ are set to $F$ in the first phase of the algorithm. On the other hand, 
the $?$ truth values are set to $T$ in the second phase of algorithm. While the first phase determines if 
sufficient paths with $a=T$ exist in the model to verify $\Phi$, the second phase checks for the paths 
with $a=F$ to disprove $\Phi$. 

For atomic propositions of the form $\lnot a$, 
the algorithm should search for paths with $a=F$ and $a=T$ in first and second phases respectively. Thus, 
the sequence of mapping needs to be reversed for atomic propositions of the form $\lnot a$. To maintain 
uniformity in the algorithm, for each atomic proposition of the form $\lnot a$ in $\Phi$, a new atomic 
proposition $a' = \lnot a$ is added in $M$. Similarly, each instance of $\lnot a$  is replaced with $a'$ 
in $\Phi$. Note that new atomic propositions need to be added only 
when a negated atomic proposition exists in $\Phi$. 

Figures \ref{example_old} and \ref{example_new} show a qDTMC $M_{1}$ 
in which a new atomic proposition $t = \lnot p$ is added. The modified models in the two phases of the Algorithm 
\ref{qMC} are given in figures \ref{small_step1} and \ref{small_step2}. 

\begin{figure}
\centering
    \begin{subfigure}[Example qDTMC $M_1$]{%
    \label{example_old}
        \begin{tikzpicture}[->,>=stealth',shorten >=1pt,auto,node distance=2.05cm, thin, state/.style={circle,draw,font=\sffamily\tiny,initial text=,minimum size=0.75cm, inner sep =0mm}]
    
    \node[initial,state,label=270 : {\tiny $s_{init}$}] (0)              {$\lnot p q?$};
    \node[state]         (2) [right of=0] {$\ p\ q  $};
    \node[state]         (1) [above of=2] {$p? \lnot q$};
    \node[state]         (3) [below of=2] {$\lnot p q?$};
    \node[state]         (4) [right of=2] {$p?q?$};
    \node[state]         (5) [right of=3] {$\ p \lnot q$};
    \node[state]         (6) [right of=1] {$\ p\ q  $};
  
    \path [every node/.style={font=\sffamily\tiny}]  
            (0) edge [bend left]    node {$0.3$}  (1)
                edge                node {$0.2$}  (2)
                edge                node {$0.5$}  (3)            
            (1) edge                node {$0.1$}  (0)
                edge [bend left]    node {$0.35$} (2)
                edge                node {$0.4$}  (4)
                edge                node {$0.25$} (6)                
            (2) edge                node {$0.1$}  (1)
                edge [bend left]    node {$0.1$}  (3)
                edge                node {$0.8$}  (4)
            (3) edge                node {$0.5$}  (2)
                edge [bend right]   node {$0.5$}  (4)
            (4) edge                node {$0.33$} (6)
                edge                node {$0.67$} (5)
            (5) edge [loop below]   node {$0.9$}  (5)
                edge                node {$0.1$}  (3)
            (6) edge [loop above]   node { $1$ }  (6);
\end{tikzpicture}
}
\end{subfigure}
\begin{subfigure}[Example qDTMC $M_1$ with new atomic proposition $t$, after Conditional 1 in Algorithm qMC]{
        \begin{tikzpicture}[->,>=stealth',shorten >=1pt,auto,node distance=2.05cm, thin, state/.style={circle,draw,font=\sffamily\tiny,initial text=,minimum size=0.75cm, inner sep =0mm}]
        \label{example_new}
    \node[initial,state,label=270 : {\tiny $s_{init}$}] (0)              {$\lnot p q?t$};
    \node[state]         (2) [right of=0] {$\ p\ q \lnot t $};
    \node[state]         (1) [above of=2] {$p? \lnot qt?$};
    \node[state]         (3) [below of=2] {$\lnot p q?t$};
    \node[state]         (4) [right of=2] {$p?q?t?$};
    \node[state]         (5) [right of=3] {$\ p \lnot q\lnot t$};
    \node[state]         (6) [right of=1] {$\ p\ q \lnot t\ $};
  
    \path [every node/.style={font=\sffamily\tiny}]  
            (0) edge [bend left]    node {$0.3$}  (1)
                edge                node {$0.2$}  (2)
                edge                node {$0.5$}  (3)            
            (1) edge                node {$0.1$}  (0)
                edge [bend left]    node {$0.35$} (2)
                edge                node {$0.4$}  (4)
                edge                node {$0.25$} (6)                
            (2) edge                node {$0.1$}  (1)
                edge [bend left]    node {$0.1$}  (3)
                edge                node {$0.8$}  (4)
            (3) edge                node {$0.5$}  (2)
                edge [bend right]   node {$0.5$}  (4)
            (4) edge                node {$0.33$} (6)
                edge                node {$0.67$} (5)
            (5) edge [loop below]   node {$0.9$}  (5)
                edge                node {$0.1$}  (3)
            (6) edge [loop above]   node { $1$ }  (6);
\end{tikzpicture}
}\end{subfigure}

\caption{First step of Algorithm qMC} \label{example}
\end{figure}
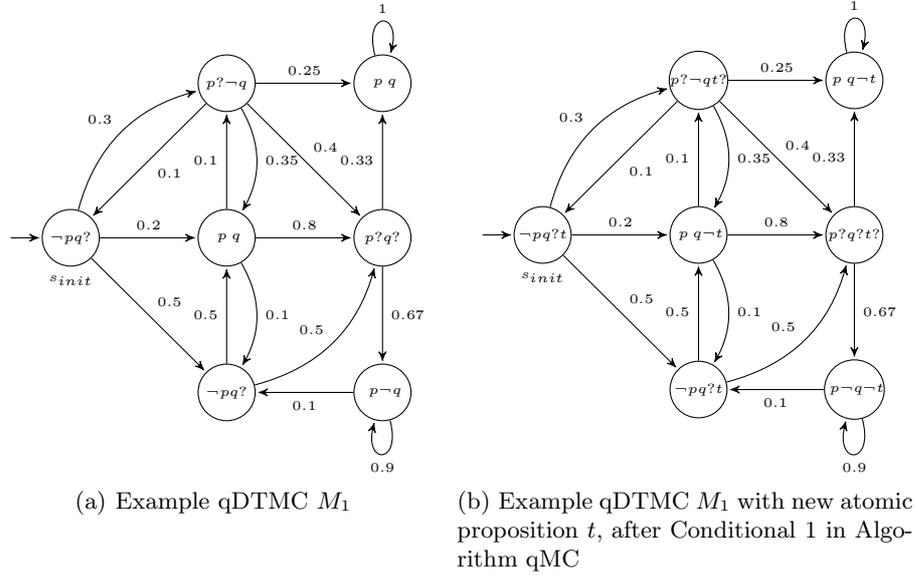

\begin{figure}
\centering
    \begin{subfigure}[Modified model $M_1^{(1)}$ in step 1.]{%
    \label{small_step1}
        \begin{tikzpicture}[->,>=stealth',shorten >=1pt,auto,node distance=2.05cm, thin, state/.style={circle,draw,font=\sffamily\tiny,initial text=,minimum size=0.75cm, inner sep =0mm}]
    
    \node[initial,state,label=270 : {\tiny $s_{init}$}] (0)              {$\lnot p \lnot qt$};
    \node[state]         (2) [right of=0] {$\ p\ q \lnot t $};
    \node[state]         (1) [above of=2] {$\lnot p \lnot q\lnot t$};
    \node[state]         (3) [below of=2] {$\lnot p \lnot q\ t$};
    \node[state]         (4) [right of=2] {$\lnot p \lnot q \lnot t$}; 
    \node[state]         (5) [right of=3] {$\ p \lnot q\lnot t$};
    \node[state]         (6) [right of=1] {$\ p\ q \lnot t\ $};
  
    \path [every node/.style={font=\sffamily\tiny}]  
            (0) edge [bend left]    node {$0.3$}  (1)
                edge                node {$0.2$}  (2)
                edge                node {$0.5$}  (3)            
            (1) edge                node {$0.1$}  (0)
                edge [bend left]    node {$0.35$} (2)
                edge                node {$0.4$}  (4)
                edge                node {$0.25$} (6)                
            (2) edge                node {$0.1$}  (1)
                edge [bend left]    node {$0.1$}  (3)
                edge                node {$0.8$}  (4)
            (3) edge                node {$0.5$}  (2)
                edge [bend right]   node {$0.5$}  (4)
            (4) edge                node {$0.33$} (6)
                edge                node {$0.67$} (5)
            (5) edge [loop below]   node {$0.9$}  (5)
                edge                node {$0.1$}  (3)
            (6) edge [loop above]   node { $1$ }  (6);
\end{tikzpicture}
}
\end{subfigure}
\begin{subfigure}[Modified model $M_1^{(2)}$ in step 2.]{%
 \label{small_step2}
        \begin{tikzpicture}[->,>=stealth',shorten >=1pt,auto,node distance=2.05cm, thin, state/.style={circle,draw,font=\sffamily\tiny,initial text=,minimum size=0.75cm, inner sep =0mm}]
    
    \node[initial,state,label=270 : {\tiny $s_{init}$}] (0)              {$\lnot p q\ t$};
    \node[state]         (2) [right of=0] {$\ p\ q \lnot t $};
    \node[state]         (1) [above of=2] {$p\ \lnot qt$};
    \node[state]         (3) [below of=2] {$\lnot p q\ t$};
    \node[state]         (4) [right of=2] {$p\ q\ t$};
    \node[state]         (5) [right of=3] {$\ p \lnot q\lnot t$};
    \node[state]         (6) [right of=1] {$\ p\ q \lnot t\ $};
  
    \path [every node/.style={font=\sffamily\tiny}]  
            (0) edge [bend left]    node {$0.3$}  (1)
                edge                node {$0.2$}  (2)
                edge                node {$0.5$}  (3)            
            (1) edge                node {$0.1$}  (0)
                edge [bend left]    node {$0.35$} (2)
                edge                node {$0.4$}  (4)
                edge                node {$0.25$} (6)                
            (2) edge                node {$0.1$}  (1)
                edge [bend left]    node {$0.1$}  (3)
                edge                node {$0.8$}  (4)
            (3) edge                node {$0.5$}  (2)
                edge [bend right]   node {$0.5$}  (4)
            (4) edge                node {$0.33$} (6)
                edge                node {$0.67$} (5)
            (5) edge [loop below]   node {$0.9$}  (5)
                edge                node {$0.1$}  (3)
            (6) edge [loop above]   node { $1$ }  (6);
\end{tikzpicture}
}\end{subfigure}
\caption{Two step modification of qDTMC $M_1$ in Fig. \ref{example_new}.}
\end{figure}
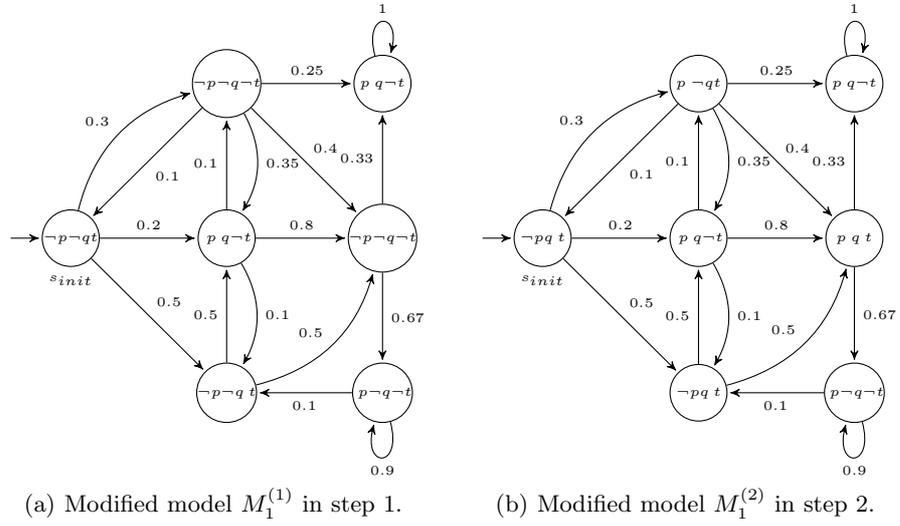

\begin{theorem}
The algorithm qMC solves the model checking problem for qPCTL: for a qDTMC $M$ and a qPCTL formula $\Phi$,
\begin{itemize}
\item qMC($M,\Phi$)=T (alt., F or ?) iff $(s_{init},\Phi)$=T (resp., F or ?)
\end{itemize}
\end{theorem}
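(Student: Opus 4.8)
\noindent The plan is to reduce the three-valued problem to the two binary model-checking calls that qMC actually performs, by proving that the pessimistic model $M^{(1)}$ (every ``$?$'' relabelled $F$) detects exactly the value $T$, while the optimistic model $M^{(2)}$ (every ``$?$'' relabelled $T$) detects exactly the value $F$. First I would record that $M$, $M^{(1)}$ and $M^{(2)}$ share the same state set and the same transition function $\mathbb{P}$, so they induce the \emph{same} probability measure $\mu$ on $(Path(s_{init}),\Sigma_{Path(s_{init})})$; only the labelling differs. I would also invoke Conditional 1 to assume, without loss of generality, that negation occurs only in front of atomic propositions (as the algorithm presupposes): each $\lnot a$ has been replaced by a fresh proposition $a'$ with $L(s,a')=\lnot L(s,a)$, so that, since $\lnot{?}={?}$, a state carries $a'={?}$ exactly where $a={?}$. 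This is the step that keeps the relabelling monotone and is precisely why the algorithm must introduce $a'$; without it, a negation would swap the roles of $M^{(1)}$ and $M^{(2)}$ and the induction below would break.

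The heart of the argument is a pair of claims, proved simultaneously by structural induction on state formulas $\Phi$ and path formulas $\psi$. Claim (A): $(s,\Phi)=T$ in $M$ iff the binary checker returns $T'$ for $\Phi$ at $s$ in $M^{(1)}$, and correspondingly $(\pi,\psi)=T$ in $M$ iff $\psi$ holds on $\pi$ in $M^{(1)}$. Claim (B): $(s,\Phi)=F$ in $M$ iff the binary checker returns $F'$ for $\Phi$ at $s$ in $M^{(2)}$, and correspondingly $(\pi,\psi)=F$ in $M$ iff $\psi$ fails on $\pi$ in $M^{(2)}$. The base cases are immediate from the relabelling: an atomic $a$ (or fresh $a'$) is $T$ in $M$ iff it survives ${?}\mapsto F$, and $F$ in $M$ iff it survives ${?}\mapsto T$. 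The case $\Phi_1\land\Phi_2$ follows from the $\land$ table with the induction hypothesis, and $X\Phi$ transfers one step along the path using (A), resp. (B), for $\Phi$. For the probabilistic operator I would use that the path-level halves of (A) and (B) yield the set equalities $\{\pi:(\pi,\psi)=T \text{ in } M\}=\{\pi:\psi \text{ holds in } M^{(1)}\}$ and $\{\pi:(\pi,\psi)=F \text{ in } M\}=\{\pi:\psi \text{ fails in } M^{(2)}\}$; since $\mu$ is common to all three chains, claim (A) is exact, both sides testing the threshold $\ge\theta$, whereas claim (B) pits the definitional condition $\mu\{\pi:(\pi,\psi)=F\}\ge 1-\theta$ against the quantity $>1-\theta$ produced by the binary checker on $M^{(2)}$, so these agree away from the boundary, which I would verify separately.

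I expect the \emph{until} operator to be the main obstacle, and within it the ``$F$'' clause. The positive clause is routine: $\Phi_1U\Phi_2$ is $T$ on $\pi$ exactly when some $\pi[i]$ makes $\Phi_2$ equal to $T$ with every earlier $\pi[i']$ making $\Phi_1$ equal to $T$, and by the induction hypothesis these are precisely the witnesses that survive ${?}\mapsto F$, i.e. the binary until in $M^{(1)}$. The delicate direction is to show, position by position, that the three-valued ``$F$'' definition of $\Phi_1U\Phi_2$ coincides with failure of the ordinary binary until in $M^{(2)}$, using that $(\pi[i],\Phi_2)\neq F$ in $M$ is equivalent to $(\pi[i],\Phi_2)=T'$ in $M^{(2)}$; the bounded case $U^{\leq k}$ is identical with the quantifiers restricted to $i\le k$. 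Finally I would assemble the theorem by tracing qMC: by (A) the first call returns $T'$ iff $(s_{init},\Phi)=T$; when it does not, $(s_{init},\Phi)\in\{F,{?}\}$, and by (B) the second call on $M^{(2)}$ returns $F'$ iff $(s_{init},\Phi)=F$, so the remaining ``$?$'' output is produced exactly when $(s_{init},\Phi)={?}$. The point demanding care here is that $M^{(1)}$ and $M^{(2)}$ are genuinely two-valued, so ``not $T'$'' means $F'$ and the three branches are mutually exclusive and exhaustive.
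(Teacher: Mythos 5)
Your reduction is the same one the paper uses: $M^{(1)}$ (with $?\mapsto F$) certifies exactly the value $T$, $M^{(2)}$ (with $?\mapsto T$) certifies exactly the value $F$, and the residual case is $?$; your claims (A) and (B) are the paper's Claims 1--6 in different packaging, and your final three-way case split over the two conditionals is identical to the algorithm trace the paper gives. What you do differently is the organization: you prove (A) and (B) once, by a simultaneous structural induction over state and path formulas, instead of stating a separate triple of claims for each of $X$, $U^{\leq k}$ and $U$. That buys you two things the paper leaves implicit: the non-probabilistic connectives (dismissed there as ``straightforward and omitted'') and, more importantly, nested probabilistic operators, which the per-operator claims do not really cover since their hypotheses quantify only over paths of the top-level formula. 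Your explicit observations that the three chains share the same $\sigma$-algebra and measure $\mu$, and that Conditional 1 is exactly what keeps the relabelling monotone in the presence of negation, are correct and worth keeping; the paper only motivates the latter informally.

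However, the two verifications you defer are precisely where the argument can break, and the paper's own proof glosses over both. First, the boundary: the semantics declares $Pr_{\geq\theta}(\psi)$ false when $\mu\{\pi:(\pi,\psi)=F\}\geq 1-\theta$, but a binary checker run on $M^{(2)}$ returns $F'$ only when the measure of failing paths exceeds $1-\theta$ strictly; at equality the algorithm outputs $?$ while the semantics says $F$, so the claimed ``iff'' needs a strict inequality somewhere. Second, the until case: the written $F$-clause for $\Phi_1U\Phi_2$ is existential (``$\exists i$ with $(\pi[i],\Phi_2)\neq F$ and $(\pi[i'],\Phi_1)=F$ for some $i'<i$''), whereas failure of the binary until in $M^{(2)}$ is the universal statement ``for \emph{every} $i$ with $(\pi[i],\Phi_2)\neq F$ there is such an $i'$.'' These differ: on a path where $\Phi_2$ is $?$ at position $1$ with $\Phi_1$ true at position $0$, $\Phi_2$ is $?$ again at position $5$, $\Phi_1$ is $F$ at position $2$, and $\Phi_2$ is never $T$, the written semantics makes the path formula $F$ (witness $i=5$, $i'=2$), yet the binary until holds on that path in $M^{(2)}$ (witness $i=1$), so your path-level claim (B) is false against the semantics as literally stated. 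You should either adopt the universal reading of the $F$-clause --- which is the intended ``false under every resolution of the unknowns'' semantics, and under which your position-by-position argument does go through --- or flag the discrepancy explicitly; asserting that the delicate direction works is not enough.
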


\begin{proof}
The algorithm qMC solves the model checking problem for qPCTL, if it matches the semantics of qPCTL for all 
state formulas. Recall that there are two types of state formulas: non-probabilistic and probabilistic. 
The proof that the algorithm works for non-probabilistic state formulas is straightforward and omitted here. 

If 
$\psi$ is a path formula in a probabilistic state formula then:
\[
qMC(M,Pr_{\geq\theta}(\psi))=\textrm{T (alt., F or ?) iff } (s_{init},Pr_{\geq\theta}(\psi))\textrm{=T (resp., F or ?)}
\]
Recall that there are three path formulas : $X\Phi$, $\Phi_1U^{\leq k} \Phi_2$ and $\Phi_1U\Phi_2$. The state formulas 
in these path formulas could in turn also be either non-probabilistic or probabilistic. This allows the algorithm 
to verify properties with both nested and non-nested path formulas. We now prove that the algorithm qMC matches the semantics 
of probabilistic state formulas.

\begin{itemize}
\item  $Pr_{\geq\theta}(X\Phi)$ : 
The correctness of \textit{next} operator can be proved through the following claims:
\begin{claim}
In the second conditional of Algorithm $qMC$, the formula $Pr_{\geq\theta}(X\Phi)$ evaluates to $T$ in the qDTMC $M$ 
if and only if it evaluates to $T'$ in the binary DTMC $M^{(1)}$.
\end{claim}
\begin{proof}
The state formula $Pr_{\geq\theta}(X\Phi)$ evaluates to $T$ in the qDTMC $M$, if there are at least $\theta$ fraction of 
paths that evaluate $X\Phi$ to $T$ in $M$. The mapping of the truth values while constructing the binary DTMC $M^{(1)}$ 
in the second conditional does not disturb $T$. So, there continues to be at least $\theta$ fraction of the paths evaluating $ X\Phi$ to 
$T$ in $M^{(1)}$. Therefore, $Pr_{\geq\theta}(X\Phi)$ evaluates to $T$ in the qDTMC $M$, if and only if it evaluates to $T'$ 
in the binary DTMC $M^{(1)}$.
\end{proof}

\begin{claim}
In the third conditional of Algorithm $qMC$, the formula $Pr_{\geq\theta}(X\Phi)$ evaluates to $F$ in the qDTMC $M$ 
if and only if it evaluates to $F'$ in the binary DTMC $M^{(2)}$.
\end{claim}
\begin{proof}
The state formula $Pr_{\geq\theta}(X\Phi)$ should evaluate to $F$ in the qDTMC $M$, if there are at least $1-\theta$ 
fraction of paths that evaluate $X\Phi$ to $F$ in $M$. The construction of the binary 
DTMC $M^{(2)}$ does not disturb $F$ in the third conditional. If the binary model checker returns $F'$ for 
$Pr_{\geq\theta}(X\Phi)$ in spite of $?$ being identified with $T$, it implies that more than $1-\theta$ fraction of 
the paths evaluate to $F$ in both $M$ and $M^{(2)}$. Thus, $Pr_{\geq\theta}(X\Phi)$ evaluates to $F$ in the qDTMC $M$, 
if and only if it evaluates to $F'$ in the binary DTMC $M^{(2)}$.
\end{proof}

\begin{claim}
In the third conditional of Algorithm $qMC$, the formula $Pr_{\geq\theta}(X\Phi)$ evaluates to $?$ in the qDTMC $M$
if and only if it evaluates to $T'$ in the binary DTMC $M^{(2)}$.
\end{claim}
\begin{proof}
In the third conditional, truth values $T$ and $?$ in the qDTMC $M$ are mapped to $T$ in the binary DTMC $M^{(2)}$. If 
the formula $Pr_{\geq\theta}(X\Phi)$ evaluates to $T'$ in $M^{(2)}$, then it means that at least $\theta$ fraction of paths 
evaluate $X\Phi$ to $T'$ in $M^{(2)}$. This fraction is the sum of fractions of the paths in which 
$X\Phi$ is either $T$ or $?$ in $M$. 

Let the fractions of paths that evaluate $X\Phi$ to $T$ and $F$ in $M$ be $p$ and $q$ respectively. 
Then reaching the third conditional implies that $p < \theta$. 
Since the formula $Pr_{\geq\theta}(X\Phi)$ evaluated to $T'$ in $M^{(2)}$, it is clear that $q < 1-\theta$. Therefore, 
there exists a fraction of paths in $M$ for which $X \Phi$ evaluates to $?$, such that the formula $Pr_{\geq\theta}(X\Phi)$ 
can neither be $T$ nor $F$ in the qDTMC $M$. Hence, the formula is evaluated to $?$ in the qDTMC $M$. Thus, an output 
of $T'$ by the binary model checker is correctly interpreted as $?$ in the qDTMC $M$. 
\end{proof}

\begin{example}
Given a qDTMC $M_{1}$ in figure \ref{example_new}, a property $\phi=\Pr_{\geq\theta}(X\ p)$ 
can be verified for different values of $\theta$. For instance, if $\theta = 0.1$, then binary DTMC $M_1^{(1)}$, 
given in figure \ref{small_step1}, has atleast $\theta$ fraction of the paths that evaluate 
to $T'$. Thus, the property evaluates to $T$ in qDTMC $M_1$.

However, if $\theta = 0.8$, then $M_1^{(1)}$ does not have sufficient fraction of paths evaluating to $T$. The algorithm 
now modifies $M_1$ to $M_{1}^{(2)}$, given in figure \ref{small_step2}, to check if sufficient number of paths that
disprove $\phi$ exist. Since more than $1-\theta$ fraction of paths in $M_{1}^{(2)}$ evaluate to $F'$, 
for $\theta = 0.8$, the property is evaluated to $F$ in qDTMC $M_1$.

Similarly, for $\theta = 0.4$, $M_1^{(1)}$ does not have sufficient fraction of paths evaluating to $T$. But, $M_{1}^{(2)}$ 
also does not have sufficient paths evaluating to $F'$. In such a case, the
property $\Phi$ is evaluated to $?$ in qDTMC $M_1$ due to lack of 
sufficient conclusive paths.

\end{example}

\item $Pr_{\geq\theta}(\Phi_1U^{\leq k} \Phi_2)$ : 
The correctness of \textit{bounded until} operator is similarly proved using following claims:

\begin{claim}
In the second conditional, $Pr_{\geq\theta}(\Phi_1U^{\leq k} \Phi_2)$ evaluates to $T$ in the qDTMC $M$ if and only if 
it evaluates to $T'$ in the binary DTMC $M^{(1)}$.
\end{claim}
\begin{proof}
The formula $Pr_{\geq\theta}(\Phi_1U^{\leq k} \Phi_2)$ is evaluated to $T$ in the qDTMC $M$, if there are at least 
$\theta$ fraction of paths that evaluate $\Phi_1U^{\leq k} \Phi_2$ to $T$ in $M$. Also, the truth value mapping during the 
construction of the binary DTMC $M^{(1)}$ does not alter $T$. Therefore, if $\Phi_1$ holds on a path until $\Phi_2$ becomes 
true in the qDTMC $M$, it will continue to remain that way in the binary DTMC $M^{(1)}$.  
Hence, $Pr_{\geq\theta}(\Phi_1U^{\leq k} \Phi_2)$ is evaluated to $T$ in the qDTMC $M$ if and only if it evaluated to $T'$ in the 
binary DTMC $M^{(1)}$.
\end{proof}

\begin{claim}
In the third conditional, $Pr_{\geq\theta}(\Phi_1U^{\leq k} \Phi_2)$ evaluates to $F$ in the qDTMC $M$ if and only if
it evaluates to $F'$ in the  binary DTMC $M^{(2)}$.
\end{claim}
\begin{proof}
In the qDTMC $M$, a path can evaluate $\Phi_1U^{\leq k}\Phi_2 $  to  $F$ if one of the following happens:
\begin{itemize}
\item $\Phi_2$ is $F$ all along the path up to the $k^{th}$ state. 
\item $\Phi_2$ is not $F$ for some $i\leq k$, but $\Phi_1$ is $F$ for some $j<i$. 
\end{itemize}

Recall that the state formula $Pr_{\geq\theta}(\Phi_1U^{\leq k}\Phi_2 )$ evaluates to $F$ in the qDTMC $M$, if there are 
at least $1-\theta$ fraction of paths that evaluate $\Phi_1U^{\leq k}\Phi_2$ to $F$ in $M$. The mapping of truth values 
does not change $F$ in third conditional. So the paths that evaluated the formula $\Phi_1U^{\leq k}\Phi_2$ to $F$ in 
the qDTMC $M$ would continue to do so in binary DTMC $M^{(2)}$. So, there are at least $1-\theta$ fraction of the 
paths that evaluate $\Phi_1U^{\leq k} \Phi_2$ to $F$ in the qDTMC, if and only if at least $1-\theta$ fraction of the 
paths evaluate the formula to $F'$ in the binary DTMC $M^{(2)}$. Thus, $Pr_{\geq\theta}(\Phi_1U^{\leq k} \Phi_2)$ 
evaluates to $F$ in the qDTMC $M$ if and only if it evaluates to $F'$ in the  binary DTMC $M^{(2)}$.
\end{proof}

\begin{claim}
In the third conditional, the formula $Pr_{\geq\theta}(\Phi_1U^{\leq k} \Phi_2)$ evaluates to $?$ in the qDTMC $M$ if 
and only if it evaluates to $T'$ in the binary DTMC $M^{(2)}$.
\end{claim}
\begin{proof}

Recall that the formula $Pr_{\geq\theta}(\Phi_1U^{\leq k} \Phi_2)$ evaluates to $?$ in qDTMC $M$, if one of 
the following occurs :

\begin{itemize}
\item $\Phi_2$ is $?$ in all the states up to $k$.
\item $\Phi_2$ is $?$ in at least one state and is never $T$ in any of the states along the path up to $k$, 
			but $\Phi_1$ is never $F$. 
\item $\Phi_2$ is $T$ for some $i\leq k$ and $\Phi_1$ is $?$ in at least one state but never $F$ in any of  
			the states up to $i$.

\end{itemize}

For the third conditional, the truth values $T$ and $?$ in the qDTMC $M$ are mapped to $T$ in the binary DTMC $M^{(2)}$. 
So, in all the above cases, the binary model checker outputs $T'$, because $?$ is mapped to $T$ in binary DTMC $M^{(2)}$. 
If the binary model checker returns $T'$ at the third conditional, then there are at least $\theta$ fraction of paths in 
the qDTMC $M$ that evaluate to either $T$ or $?$. If the fraction of paths that evaluate $\Phi_1U^{\leq k} \Phi_2$ to 
$T$ in $M$ is $p$, then from the second conditional, $p < \theta$. Further, let the fraction of paths that evaluate $X\Phi$ 
to $F$ in $M$ be $q$. If the formula $Pr_{\geq\theta}(\Phi_1U^{\leq k} \Phi_2)$ evaluates to $T'$ in $M^{(2)}$, it is 
clear that $q < 1-\theta$. It can then easily be concluded that there do not exist sufficient number of conclusive paths 
(either $T$ or $F$) 
in the qDTMC $M$, and the formula $Pr_{\geq\theta}(\Phi_1U^{\leq k} \Phi_2)$ evaluates to $?$.

\end{proof}

\item $Pr_{\geq\theta}(\Phi_1U^{\leq k} \Phi_2)$ : The proof for \textit{unbounded until} operator is a simple extension 
of the \textit{bounded until} operator. 
\end{itemize}
\qed
\end{proof}

\begin{remark}
If the state formula occurs in negated form as $\lnot\Phi$, then we use $\Phi'=\lnot\Phi$ in the $qMC$
algorithm, proceed as usual and negate the final answer as per the semantics of three valued logic.
\end{remark}

\begin{remark}
If the probabilistic query is of the type $Pr_{<\theta}(\psi)$, we use the identity 
$Pr_{<\theta}(\psi)=\lnot Pr_{\geq \theta} (\psi)$ and proceed as usual.
\end{remark}

\begin{remark}
The Algorithm \ref{qMC} discussed here is symmetric in the sense that the result would not change if the order 
of truth value mapping is swapped in the algorithm. 
\end{remark}

\section{Implementation and Results}

We use PRISM~\cite{prism} for the binary model checker subroutine in the implementation of the $qMC$ 
algorithm. The algorithm works for both numerical and statistical model checking. The inputs to the model 
checker are the three valued probabilistic model and the property specification. The model checker then 
verifies the input property in the given model. If the input property contains nested probabilistic 
operators, then each inner probabilistic formula is considered as a separate property and verified first. 
The results of these sub-formulas are then replaced in the input property to remove nesting. 
However, the current version of PRISM does not support statistical model checking of nested properties.

\subsection{Results with qDTMC}

We illustrate our approach with the qDTMCs $M_1$ (Fig~\ref{example}), $M_2$ (Fig~\ref{small_more}), $M_3$ (Fig~\ref{input_less}) and $M_4$ (Fig~\ref{input_more}).
Note that $M_1$ and $M_2$ 
(and $M_3$ and $M_4$) have the same state space and differ only in the number of \emph{unknowns}.
 These models are checked against different properties to observe the effect of unknown information on the behaviour of the models. 

The first set of verification tests was done on two small qDTMCs, $M_1$ and $M_2$, given in Fig~\ref{example} 
and Fig~\ref{small_more}. In these models, there are two atomic propositions $p$ and $q$, each of which can have 
a truth value from the set $\{T,F, ?\}$. These models are verified for two properties: 
\(\Phi_1\ = \Pr_{\geq \theta}(\lnot p\ U\ r)\) and \(\Phi_2\ = \Pr_{\geq \theta}(X q)\). An additional atomic 
proposition $t \equiv \lnot p $ is added in the models to handle the negation in $\Phi_1$. Thus, $\Phi_1$ can 
now be written as $\Pr_{\geq \theta}(t\ U\ r)$. The results corresponding to various values of $\theta$ in 
$\Phi_1$ and $\Phi_2$, for qDTMCs $M_1$ and $M_2$, are in tables \ref{small_prop1} and \ref{small_prop2}, respectively.

\begin{figure}
\centering

        \begin{tikzpicture}[->,>=stealth',shorten >=1pt,auto,node distance=2.05cm, thin, 
        state/.style={circle,draw,font=\sffamily\tiny,initial text=,minimum size=0.75cm, inner sep =0mm}]
        \node[initial,state,label=270 : {\tiny $s_{init}$}] (0)              {$ \lnot p q? t?$};
    \node[state]         (2) [right of=0] {$\ p\ q \lnot t $};
    \node[state]         (1) [above of=2] {$p? q?t?$};
    \node[state]         (3) [below of=2] {$ p? q?t?$};
    \node[state]         (4) [right of=2] {$p?q?t?$};
    \node[state]         (5) [right of=3] {$\ p \lnot q\lnot t$};
    \node[state]         (6) [right of=1] {$\ p\ q\ \lnot t$};
  
    \path [every node/.style={font=\sffamily\tiny}]  
            (0) edge [bend left]    node {$0.3$}  (1)
                edge                node {$0.2$}  (2)
                edge                node {$0.5$}  (3)            
            (1) edge                node {$0.1$}  (0)
                edge [bend left]    node {$0.35$} (2)
                edge                node {$0.4$}  (4)
                edge                node {$0.25$} (6)                
            (2) edge                node {$0.1$}  (1)
                edge [bend left]    node {$0.1$}  (3)
                edge                node {$0.8$}  (4)
            (3) edge                node {$0.5$}  (2)
                edge [bend right]   node {$0.5$}  (4)
            (4) edge                node {$0.33$} (6)
                edge                node {$0.67$} (5)
            (5) edge [loop below]   node {$0.9$}  (5)
                edge                node {$0.1$}  (3)
            (6) edge [loop above]   node { $1$ }  (6);
\end{tikzpicture}
\caption{Another example qDTMC with small state-space. This qDTMC has same state space as $M_1$, but more number of unknowns.}
\label{small_more}
\end{figure}
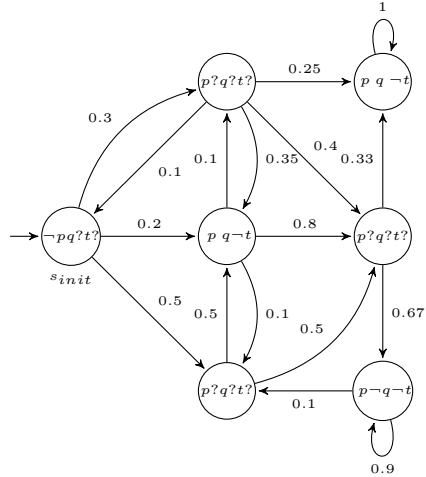

\begin{table}
\centering
    \begin{tabular}{|c|c|c|c|c|c|c|c|c|c|}
          \hline
          &&&&&&&&& \\
          $\theta$ &  0.1  &  0.2  &  0.3  &  0.4  &  0.5  &  0.6  &  0.7  &  0.8  &  0.9  \\[6pt]
         \hline &&&&&&&&& \\
          $M_1$  & T & T & T & T & ? & ? & ? & ? & ? \\
           \hline &&&&&&&&& \\
          $M_2$  & T & T & ? & ? & ? & ? & ? & ? & ? \\ 
          \hline
    \end{tabular}
    \caption{Results for various values of $\theta$ for the property \(\Phi_1\ = \Pr_{\geq \theta}(\lnot p\ U\ r)\)}
    \label{small_prop1}
\centering
    \begin{tabular}{|c|c|c|c|c|c|c|c|c|c|}
          \hline
          &&&&&&&&& \\
          $\theta$ &  0.1\ &  0.2 &  0.3 &  0.4 &  0.5 &  0.6 &  0.7 &  0.8 &  0.9  \\[6pt]
         \hline &&&&&&&&& \\
          $M_1$   & T & T & ? & ? & ? & ? & ? & F & F \\
           \hline &&&&&&&&& \\
          $M_2$ & T & T & ? & ? & ? & ? & ? & ? & ? \\ 
          \hline
    \end{tabular}
    \caption{Results for various values of $\theta$ for the property $\Phi_2 = \Pr_{\geq \theta}(Xq)$}
    \label{small_prop2}
\end{table}

It is evident from the results that when a model checker does not have enough paths to either accept or reject a proposition, 
it generates $?$ as result. The $?$ truth value indicates to the model designer that the model needs more details, so as to be 
certain of its required behavior. On the other hand, a $T$ or an $F$ result concludes that in spite of missing information, 
the property or the behavior of the model can be easily verified. The difference in results for models $M_1$ and $M_2$ also 
shows that with the increase in absence or uncertainty of the information in the models, uncertainty in the behavior 
of the model increases as well.

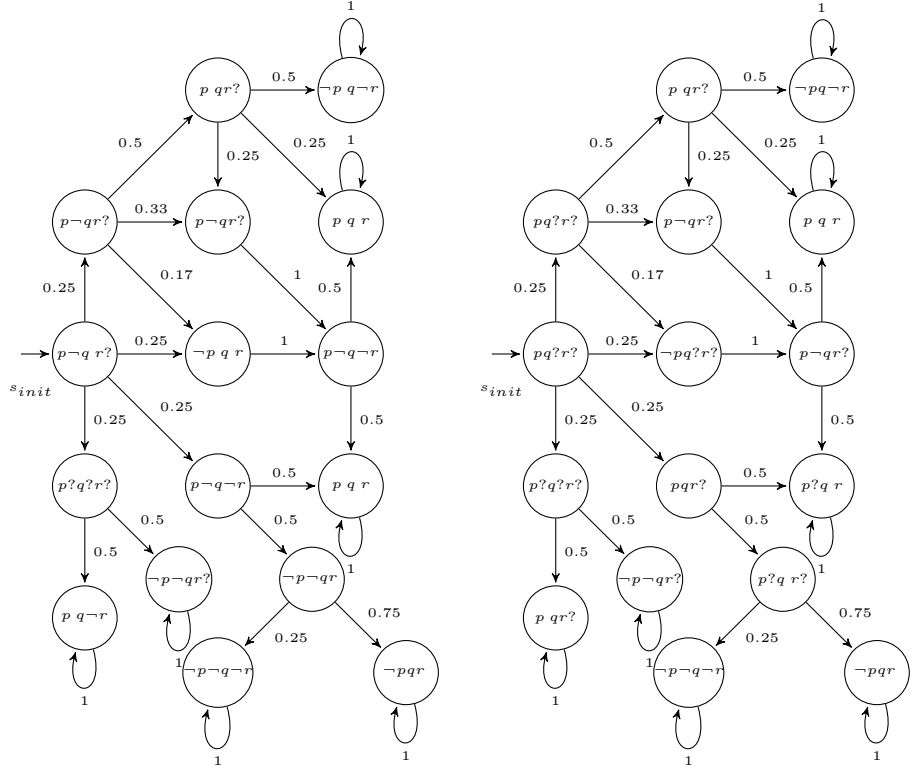
\begin{figure}
\centering
    \begin{subfigure}[qDTMC $M_3$ with less unknown values]{%
     \label{input_less}
        \begin{tikzpicture}[->,>=stealth',shorten >=1pt,auto,node distance=1.75cm, thin, state/.style={circle,draw,font=\sffamily\tiny,initial text=,minimum size=0.85cm, inner sep =0mm}]
    \node[initial,state,label=225 : {\tiny $s_{init}$}] (0)              {$ p \lnot q\ r?$ };
    \node[state]         (1)  [above of=0] {$p \lnot q r?$};
    \node[state]         (2)  [right of=0] {$\lnot p\ q \ r$};
    \node[state]         (3)  [below of=2] {$ p \lnot q \lnot r$};
    \node[state]         (4)  [below of=0] {$p?q?r?$};
    \node[state]         (6)  [right of=1] {$\ p \lnot q r?$};
    \node[state]         (5)  [above of=6] {$\ p \ qr?$};
    \node[state]         (12) [right of=2] {$p \lnot q \lnot r$};
    \node[state]         (7)  [below of= 12] {$p \ q\  r$};
    \node[state]         (8) [below right of=3] {$\lnot p \lnot q r\ $};
    \node[state]         (9) [below right of=4] {$\lnot p \lnot q r?$};
    \node[state]         (10) [right of=5] {$\lnot p \ q \lnot r\ $};
    \node[state]         (11) [right of=6] {$ p \ q \ r $};
    \node[state]         (13) [below right of=8] {$\lnot pqr\ $};
    \node[state]         (14) [below left of=8] {$\lnot p\lnot q\lnot r$};
    \node[state]         (15) [below of=4] {$p \ q \lnot r$};
    \path [every node/.style={font=\sffamily\tiny}]  
            (0) edge 			    node {$0.25$}  (1)
                edge                node {$0.25$}  (2)
                edge           node {$0.25$}  (3)
                edge                node {$0.25$}  (4)
            (1) edge                node {$0.5$}   (5)
                edge 			    node {$0.17$}  (2)
                edge                node {$0.33$}  (6)
            (2) edge                node {$1$}	   (12)
            (3) edge                node {$0.5$}   (7)
                edge   				node {$0.5$}   (8)
            (4) edge                node {$0.5$}   (15)
                edge                node {$0.5$}   (9)
            (5) edge			    node {$0.5$}   (10)
                edge                node {$0.25$}  (11)
          		edge                node {$0.25$}  (6)
            (6) edge 				node { $1$ }   (12)
            (7) edge [loop below] 	node { $1$ }   (7)
            (8) edge 				node {$0.75$}  (13)
                edge 				node {$0.25$}  (14)
            (9) edge [loop below] 	node { $1$ }   (9)
            (10) edge [loop above] 	node { $1$ }   (10)
            (11) edge [loop above] 	node { $1$ }   (11)
            (12) edge 				node {$0.5$}   (11)
                 edge 				node {$0.5$}   (7)
            (13) edge [loop below] 	node { $1$ }   (13)
            (14) edge [loop below] 	node { $1$ }   (14)
            (15) edge [loop below] 	node { $1$ }   (15);
\end{tikzpicture}
}
\end{subfigure}
    \begin{subfigure}[qDTMC $M_4$ with more unknown values]{%
     \label{input_more}
        \begin{tikzpicture}[->,>=stealth',shorten >=1pt,auto,node distance=1.75cm, thin, state/.style={circle,draw,font=\sffamily\tiny,initial text=,minimum size=0.85cm, inner sep =0mm}]
    \node[initial,state,label=225 : {\tiny $s_{init}$}] (0)              {$ p q? r?$};
    \node[state]         (1)  [above of=0] {$p q? r?$};
    \node[state]         (2)  [right of=0] {$\lnot p q? r?$};
    \node[state]         (3)  [below of=2] {$ p q r?$};
    \node[state]         (4)  [below of=0] {$p?q?r?$};
    \node[state]         (6)  [right of=1] {$p \lnot q r?$};
    \node[state]         (5)  [above of=6] {$ p \ qr?$};
    \node[state]         (12) [right of=2] {$\ p \lnot q r? $};
    \node[state]         (7)  [below of= 12] {$p? q\  r$};
    \node[state]         (8) [below right of=3] {$p? q\ r? $};
    \node[state]         (9) [below right of=4] {$\lnot p \lnot q r?$};
    \node[state]         (10) [right of=5] {$\lnot p q \lnot r $};
    \node[state]         (11) [right of=6] {$ p \ q \ r$};
    \node[state]         (13) [below right of=8] {$\lnot pqr\ $};
    \node[state]         (14) [below left of=8] {$\lnot p\lnot q\lnot r$};
    \node[state]         (15) [below of=4] {$p \ q  r? $};
    \path [every node/.style={font=\sffamily\tiny}]  
            (0) edge 			    node {$0.25$}  (1)
                edge                node {$0.25$}  (2)
                edge           node {$0.25$}  (3)
                edge                node {$0.25$}  (4)
            (1) edge                node {$0.5$}   (5)
                edge 			    node {$0.17$}  (2)
                edge                node {$0.33$}  (6)
            (2) edge                node {$1$}	   (12)
            (3) edge                node {$0.5$}   (7)
                edge   				node {$0.5$}   (8)
            (4) edge                node {$0.5$}   (15)
                edge                node {$0.5$}   (9)
            (5) edge			    node {$0.5$}   (10)
                edge                node {$0.25$}  (11)
          		edge                node {$0.25$}  (6)
            (6) edge 				node { $1$ }   (12)
            (7) edge [loop below] 	node { $1$ }   (7)
            (8) edge 				node {$0.75$}  (13)
                edge 				node {$0.25$}  (14)
            (9) edge [loop below] 	node { $1$ }   (9)
            (10) edge [loop above] 	node { $1$ }   (10)
            (11) edge [loop above] 	node { $1$ }   (11)
            (12) edge 				node {$0.5$}   (11)
                 edge 				node {$0.5$}   (7)
            (13) edge [loop below] 	node { $1$ }   (13)
            (14) edge [loop below] 	node { $1$ }   (14)
            (15) edge [loop below] 	node { $1$ }   (15);
\end{tikzpicture} }
\end{subfigure}
\caption{Example qDTMCs with large state-space}
\end{figure}

The qDTMCs $M_3$ and $M_4$ have a larger state-space as can be see in Figures \ref{input_less} and \ref{input_more}. Similar to the 
previous case, these models also differ only in labeling functions but contain three atomic propositions 
$p$, $q$ and $r$. These models are verified for both non-nested and nested properties. The behaviour of qDTMCs 
$M_3$ and $M_4$ are verified using properties $\Phi_2 = \Pr_{\geq \theta}(Xq)$, $\Phi_3 = \Pr_{\geq \theta}(p\ U\ r)$, 
$\Phi_4 = \Pr_{\geq \theta}(p\ U\ \Pr_{\geq 0.8}(Xr))$ and $\Phi_5 = \Pr_{\geq \theta}(\Pr_{\geq 0.2}(p\ U\  r)\  U\ q)$. 
The results for these properties, for various values of $\theta$ can be found in Tables \ref{large_prop1}, 
\ref{large_prop2}, \ref{large_prop3} and \ref{large_prop4}. The results for these models concur with the ones 
for models $M_1$ and $M_2$, and same conclusions can be made in this case as well.

\begin{table}
\centering
    \begin{tabular}{|c|c|c|c|c|c|c|c|c|c|}
          \hline
          &&&&&&&&& \\
          $\theta$ &  0.1\ &  0.2 &  0.3 &  0.4 &  0.5 &  0.6 &  0.7 &  0.8 &  0.9  \\[6pt]
         \hline &&&&&&&&& \\
          $M_3$   & T & T & ? & ? & ? & F & F & F & F \\
           \hline &&&&&&&&& \\
          $M_4$ & T & T & ? & ? & ? & ? & ? & ? & ? \\ 
          \hline
    \end{tabular}
    \caption{Results for various values of $\theta$ for the property $\Phi_2 = \Pr_{\geq \theta}(Xq)$}, 
    \label{large_prop1}
\centering
    \begin{tabular}{|c|c|c|c|c|c|c|c|c|c|}
          \hline
          &&&&&&&&& \\
          $\theta$ &  0.1\ &  0.2 &  0.3 &  0.4 &  0.5 &  0.6 &  0.7 &  0.8 &  0.9  \\[6pt]
         \hline &&&&&&&&& \\
          $M_3$   & T & T & T & T & T & T & ? & ? & ? \\
           \hline &&&&&&&&& \\
          $M_4$ & T & T & ? & ? & ? & ? & ? & ? & ? \\ 
          \hline
    \end{tabular}
    \caption{Results for various values of $\theta$ for the property $\Phi_3 = \Pr_{\geq \theta}(p\ U\ r)$}
    \label{large_prop2}
\centering
    \begin{tabular}{|c|c|c|c|c|c|c|c|c|c|}
          \hline
          &&&&&&&&& \\
          $\theta$ &  0.1\ &  0.2 &  0.3 &  0.4 &  0.5 &  0.6 &  0.7 &  0.8 &  0.9  \\[6pt]
         \hline &&&&&&&&& \\
          $M_3$   & T & T & T & ? & ? & ? & F & F & F \\
           \hline &&&&&&&&& \\
          $M_4$ & T & T & ? & ? & ? & ? & ? & ? & ? \\ 
          \hline
    \end{tabular}
    \caption{Results for various values of $\theta$ for the property $\Phi_4 = \Pr_{\geq \theta}(p\ U\ \Pr_{\geq 0.8}(Xr))$}
    \label{large_prop3}

\centering
    \begin{tabular}{|c|c|c|c|c|c|c|c|c|c|}
          \hline
          &&&&&&&&& \\
          $\theta$ &  0.1\ &  0.2 &  0.3 &  0.4 &  0.5 &  0.6 &  0.7 &  0.8 &  0.9  \\[6pt]
         \hline &&&&&&&&& \\
          $M_3$   & T & T & T & T & T & T & T & ? & ? \\
           \hline &&&&&&&&& \\
          $M_4$ & T & T & T & T & ? & ? & ? & ? & ? \\ 
          \hline
    \end{tabular}
    \caption{Results for various values of $\theta$ for the property $\Phi_5 = \Pr_{\geq \theta}(\Pr_{\geq 0.2}(p\ U\  r)\  U\ q)$}
    \label{large_prop4}
\end{table}

\subsection{\emph{Unknowns} in Code}
Finally, we present an example of a code listing that has a module whose implementation details (and hence correctness properties etc) are
\emph{unknown}. 
This could be due to several reasons: for instance, if a module in the system is not implemented yet and exists only in ``stub" form
or if an implementation of the module exists, but whose correctness is not established. Therefore, it would be 
incorrect to assume any truth value for certain atomic propositions in the states that represent such a module. 

Listing \ref{snippet} shows a code snippet of a system wherein $func$ is a function call whose internal
working is not known to the system designer. The value thus returned by this function is not known. 
The system contains three atomic propositions, $p : var1=10$, $q : var2=z$ and $r : var3 \geq 0$, which 
are true if and only if their respective conditions hold true in the system.  
The truth values of these atomic propositions change with each set of assignment statements in the code. The module can now be 
modeled as a $qDTMC$ using the above atomic propositions, as shown in Fig. \ref{code}. 
Each state in the qDTMC represents the possible truth values of atomic propositions during a code execution. For instance, when the variables 
$var1$, $var2$ and $var3$ are initialized to -1, then all three atomic propositions are false in the initial state of the qDTMC.

Algorithm  qMC can now evaluate various properties for this module. The results of an example qPCTL query 
$\Pr_{\geq \theta}(\lnot q \ U \ p)$ are shown in Table~\ref{code_DTMC}.

\begin{lstlisting}[caption={Code snippet},frame=single,label=snippet,captionpos=b,language= C,basicstyle=\small]
int x= randint(1,5);
int y= randint(1,10);
\\randint(int a,int b):returns a random int between a and b

int z=10;
int var1=-1, var2=-1, var3=-1;

var1= x+y;
if (var1%2==0)
    var2=z;
else
    var2=func(z);
if (var1%3 == 0) {
    var1=5;
    var2=7;
    var3=0;	  }
else {
    var1=10;
    var3=func(z);  }

\end{lstlisting}

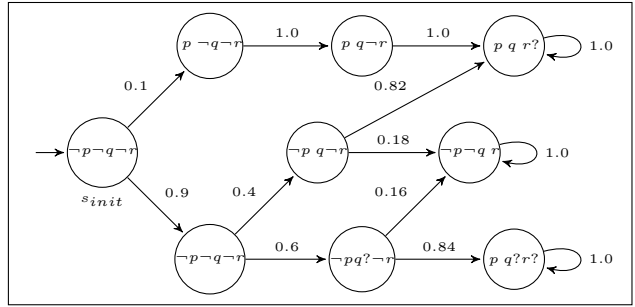
\begin{figure}[!ht]{}
\centering
\begin{tikzpicture}[framed,->,>=stealth',shorten >=1pt,auto,node distance=2cm, thin, state/.style={circle,draw,font=\sffamily\tiny,initial text=,minimum size=0.8cm, inner sep =0mm}]
    
    \node[initial,state,label=270 : {\tiny $s_{init}$}] (0)              {$\lnot p \lnot q \lnot r$};
    \node[state]         (2) [below right of=0] {$\lnot p \lnot q \lnot r$};
    \node[state]         (1) [above right of=0] {$\ p\ \lnot q \lnot r$};
    \node[state]         (6) [right of=2] {$\lnot p q? \lnot r$};
    \node[state]         (3) [right of=1] {$\ p\ q \lnot r$};
    \node[state]         (4) [right of=3] {$\ p \ q\ r?$};
    \node[state]         (5) [above right of=2] {$\lnot p\ q \lnot r $};
    \node[state]         (7) [right of=5] {$\lnot p \lnot q \ r$};
    \node[state]         (10)[right of=6] {$p\ q? r?$};

    \path [every node/.style={font=\sffamily\tiny}]  
            (0) edge 			    node {$0.1$}  (1)
                edge                node {$0.9$}  (2)
            (1) edge                node {$1.0$}  (3)
            (2) edge                node {$0.4$}  (5)
                edge 			    node {$0.6$}  (6)
            (3) edge                node {$1.0$}  (4)
            (4) edge [loop right]   node {$1.0$}  (4)
            (5) edge 		   		node {$0.18$}  (7)
                edge                node {$0.82$}  (4)
            (6) edge 		   node { $0.16$ }  (7)
            	edge 		   node { $0.84$ }  (10)
            (7) edge [loop right]   node {$1.0$}  (7)
            (10) edge [loop right]	node {$1.0$}  (10) ;
\end{tikzpicture}
\caption{qDTMC $M_5$ for code snippet in Listing \ref{snippet}}
\label{code}
\end{figure}

 \begin{table}
\centering
\begin{tabular}{|c|c|c|c|c|c|c|c|c|c|}
          \hline
          &&&&&&&&& \\
          $\theta$ &  0.1\ &  0.2 &  0.3 &  0.4 &  0.5 &  0.6 &  0.7 &  0.8 &  0.9  \\[6pt]
         \hline &&&&&&&&& \\
          $M_5$   & T & ? &  ? & ? & ? & F & F & F & F \\
          \hline
    \end{tabular}
\caption{Results for $ \Pr_{\geq \theta} (\lnot q U p)$}
\label{code_DTMC}
\end{table}    

\section{Conclusions and Future Directions}
In this paper we presented a technique to determine the feasibility of model checking in the
presence of uncertainty in the implementation of a model of a stochastic system.

We have presented a small example of a code with an incompletely determined module.
However, it involved a manual conversion of the code to a qDTMC for the purpose of model checking.
It remains to be seen how well this conversion and our technique scales to a full-fledged
code of, say, a discrete event simulator.

It would also be interesting to see how this approach can be applied to other modeling-formalism/query-logic 
pairs.

\bibliographystyle{splncs03}
\bibliography{bib}

\begin{thebibliography}{10}
\providecommand{\url}[1]{\texttt{#1}}
\providecommand{\urlprefix}{URL }

\bibitem{PVesta}
AlTurki, M., Meseguer, J.: Pvesta: {A} parallel statistical model checking and
  quantitative analysis tool. In: Algebra and Coalgebra in Computer Science -
  4th Intl. Conf., {CALCO} 2011, Winchester, UK, August 30 - September 2, 2011.
  Proc. pp. 386--392 (2011)

\bibitem{CTMC}
Baier, C., Haverkort, B., Hermanns, H., Katoen, J.P.: Model-checking algorithms
  for continuous-time markov chains. IEEE Transactions on Software Engineering
  29(6),  524--541 (June 2003)

\bibitem{ModelCheck}
Baier, C., Katoen, J.P.: Principles of Model Checking (Representation and Mind
  Series). The MIT Press (2008)

\bibitem{bruns99model}
Bruns, G., Godefroid, P.: Model checking partial state spaces with 3-valued
  temporal logics. In: Computer Aided Verification. pp. 274--287. Springer
  (1999)

\bibitem{bruns00gen}
Bruns, G., Godefroid, P.: Generalized model checking: Reasoning about partial
  state spaces. In: CONCUR 2000—Concurrency Theory, pp. 168--182. Springer
  (2000)

\bibitem{CMC}
Caillaud, B., Delahaye, B., Larsen, K.G., Legay, A., Pedersen, M.L., Wąsowski,
  A.: Constraint markov chains. Theoretical Computer Science  412(34),  4373 --
  4404 (2011)

\bibitem{chechik00}
Chechik, M.: On interpreting results of model-checking with abstraction. Tech.
  rep., University of Toronto (2000)

\bibitem{chechik01model}
Chechik, M., Easterbrook, S., Petrovykh, V.: Model-checking over multi-valued
  logics. In: FME 2001: Formal Methods for Increasing Software Productivity,
  pp. 72--98. Springer (2001)

\bibitem{courcoubetis88}
Courcoubetis, C., Yannakakis, M.: Verifying temporal properties of finite-state
  probabilistic programs. In: Foundations of Computer Science, 1988., 29th
  Annual Symposium on. pp. 338--345. IEEE (1988)

\bibitem{APA}
Delahaye, B., Katoen, J.P., Larsen, K.G., Legay, A., Pedersen, M.L., Sher, F.,
  Wąsowski, A.: Abstract probabilistic automata. Information and Computation
  232,  66 -- 116 (2013)

\bibitem{dontknow}
Fecher, H., Leucker, M., Wolf, V.: Don’t know in probabilistic systems. In:
  Model checking software, pp. 71--88. Springer (2006)

\bibitem{genLTL}
Godefroid, P., Piterman, N.: Ltl generalized model checking revisited.
  International journal on software tools for technology transfer  13(6),
  571--584 (2011)

\bibitem{HanssonJ94}
Hansson, H., Jonsson, B.: A logic for reasoning about time and reliability.
  Formal Asp. Comput.  6(5),  512--535 (1994)

\bibitem{huththree}
Huth, M., Piterman, N., Wagner, D.: Three-valued abstractions of markov chains:
  Completeness for a sizeable fragment of pctl. In: Fundamentals of Computation
  Theory. pp. 205--216. Springer (2009)

\bibitem{PLASMA1}
Jegourel, C., Legay, A., Sedwards, S.: A Platform for High Performance
  Statistical Model Checking -- PLASMA, pp. 498--503. Springer Berlin
  Heidelberg, Berlin, Heidelberg (2012)

\bibitem{klinkthree}
Klink, D.: Three-valued abstraction for stochastic systems. Verlag Dr. Hut
  (2010)

\bibitem{prism}
Kwiatkowska, M., Norman, G., Parker, D.: {PRISM} 4.0: Verification of
  probabilistic real-time systems. In: Gopalakrishnan, G., Qadeer, S. (eds.)
  Proc. 23rd International Conference on Computer Aided Verification (CAV'11).
  LNCS, vol. 6806, pp. 585--591. Springer (2011)

\bibitem{PLASMA2}
Legay, A., Sedwards, S.: On statistical model checking with plasma. In: 2014
  Theoretical Aspects of Software Engineering Conference. pp. 139--145 (Sept
  2014)

\bibitem{RV10a}
Legay, A., Delahaye, B., Bensalem, S.: Statistical model checking: An overview.
  In: Barringer, H., Falcone, Y., Finkbeiner, B., Havelund, K., Lee, I., Pace,
  G., Roşu, G., Sokolsky, O., Tillmann, N. (eds.) Runtime Verification,
  Lecture Notes in Computer Science, vol. 6418, pp. 122--135. Springer Berlin
  Heidelberg (2010)

\bibitem{mal93many}
Malinowski, G.: Many-valued logics. Clarendon Press, Oxford (1993)

\bibitem{putnam57three}
Putnam, H.: Three-valued logic. Philosophical Studies  8(5),  73--80 (1957)

\bibitem{manyval}
Rescher, N.: Many-valued logic. Springer (1968)

\bibitem{MultiVesta}
Sebastio, S., Vandin, A.: Multivesta: statistical model checking for discrete
  event simulators. In: 7th Intl. Conf. on Performance Evaluation Methodologies
  and Tools, ValueTools '13, Torino, Italy, December 10-12, 2013. pp. 310--315
  (2013)

\bibitem{Vesta}
Sen, K., Viswanathan, M., Agha, G.A.: {VESTA:} {A} statistical model-checker
  and analyzer for probabilistic systems. In: Second Intl. Conf. on the
  Quantitative Evaluation of Systems {(QEST 2005)}, 19-22 September 2005,
  Torino, Italy. pp. 251--252 (2005)

\bibitem{NumVsStat}
Younes, H.L.S., Kwiatkowska, M.Z., Norman, G., Parker, D.: Numerical vs.
  statistical probabilistic model checking: An empirical study. In: Tools and
  Algorithms for the Construction and Analysis of Systems, 10th Intl. Conf.,
  {TACAS} 2004, Held as Part of the Joint European Conf.s on Theory and
  Practice of Software, {ETAPS} 2004, Barcelona, Spain, March 29 - April 2,
  2004, Proc. pp. 46--60 (2004)

\bibitem{Younes}
Younes, H.L.S., Simmons, R.G.: Probabilistic verification of discrete event
  systems using acceptance sampling. In: Computer Aided Verification, 14th
  Intl. Conf., {CAV} 2002,Copenhagen, Denmark, July 27-31, 2002, Proc. pp.
  223--235 (2002)

\end{thebibliography}

\end{document}